\newcommand{\vo}{\vec{o}\@ifnextchar{^}{\,}{}}
\def\eps{\epsilon}%
\def\tensor{\,\raise2pt\hbox{${}_{\otimes}$}\,}
\def\fdg{\,:\,}
\def\ptl{\partial}
\def\rest#1{\raise-2pt\hbox{${\lfloor_{#1}}$}}
\def\ulin#1{\underline{#1}{}}
\def\grad{{\nabla}}
\newcommand{\leftexp}[2]{{\vphantom{#2}}^{#1}{#2}}
\def\halb{\frac{1}{2}}
\def \a{\alpha}
\def \b {\beta}
\newtheorem{theorem}{Theorem}[section]
\newtheorem{proposition}[theorem]{Proposition}
\newcommand{\ba}{\begin{array}}
\newcommand{\ea}{\end{array}}
\newcommand{\bea}{\begin{eqnarray}}
\newcommand{\eea}{\end{eqnarray}}
\newcommand{\bee}{\begin{eqnarray*}}
\newcommand{\eee}{\end{eqnarray*}}
\renewcommand{\gg}{{\bf g}}
\renewcommand{\a}{\alpha}
\renewcommand{\b}{\beta}
\renewcommand{\r}{\rho}
\begin{document}

\vol{40}
\num{2}
\volyear{2018}
\refnumber{P485}

\startpage{39}
\numberofpages{16}
\received{2017-11-04}
\revised{2018-04-06}

\title[Maxwell's Equations on Kerr-de Sitter Black Holes]{ A Positive-Definite Energy Functional for Axially Symmetric Maxwell's Equations on  Kerr-de Sitter Black Hole Spacetimes}


\presentedby{Niky Kamran}
\author{Nishanth Gudapati}
\address{Department of Mathematics, Yale University, 10 Hillhouse Avenue, New Haven, CT-06511, USA}

\email{nishanth.gudapati@yale.edu}

\amsclassification{83C50}{35L65}
\keywords{Kerr-de Sitter Black Holes, Stability of Black Holes, Wave Maps}


\maketitle

\begin{abstract}
We prove that there exists a phase space of canonical variables, for the initial value problem for axially symmetric Maxwell fields with compactly supported initial data and propagating in  Kerr-de Sitter black hole spacetimes, such that their motion is restricted to the level sets of a positive-definite Hamiltonian, despite the ergo-region.  
\end{abstract}

\french

\begin{resume}
On d\'emontre qu'il existe un espace de phase de variables
canoniques, pour le probl\`eme des valeurs initials pour les champs de
Maxwell sym\'etriques \`a donne\'es initiales de support compact et
\`a propagation dans les espaces-temps de trou noir Kerr-de Sitter,
tel que leur motion est restrainte aux ensembles de niveau
d'une hamiltonienne de type positif, en d\'epit de l'ergo-r\'egion.
\end{resume}

\english

\section{Kerr-de Sitter Black Holes}

Consider the Kerr-de Sitter family of black holes $(\bar{M}, \bar{g}):$
\begin{align}\label{KdS1}
\bar{g}=& -\frac{\Delta}{\Sigma} \left(\frac{dt - a\sin^2 \theta d\phi}{1+ \frac{\Lambda}{3}a^2} \right)^2 + \frac{\Sigma}{\Delta} dr^2 + \frac{\Sigma}{\Pi} d\theta^2  \notag\\
&+ 
\frac{\sin^2 \theta (1+ \frac{\Lambda}{3} a^2 \cos^2\theta)}{\Sigma}  \left(\frac{adt- (r^2 +a^2) d\phi}{1+ \frac{\Lambda}{3} a^2} \right)^2
\end{align}
where 
\begin{subequations}
\begin{align}
\Delta =& r^2 -2mr + a^2 -\frac{\Lambda r^2}{3} (r^2 + a^2), \\
\Sigma=& r^2 + a^2 \cos^2 \theta, \\
\Pi =& 1+ \frac{\Lambda}{3} a^2 \cos^2 \theta,
\end{align}
\end{subequations}
$\theta \in [0, \pi],$ $\phi \in [0, 2\pi)$, $\vert a \vert <m.$ The quartic polynomial function $\Delta(r)$ is such that it admits precisely one negative root and three distinct positive roots $\{ r_{\pm}, r_c\}, r_+ < r_c$, which correspond to the cases of physical interest. In this work we shall restrict to $a \neq 0$ and the regular region $r_+ < r < r_c.$ The Kerr-de Sitter family is a solution of Einstein's equations in 3+1 dimensions with a positive cosmological constant: 
\begin{align}\label{ee}
\bar{R}_{\mu \nu} - \halb \bar{g}_{\mu \nu} R_{\bar{g}} + \Lambda \bar{g}_{\mu \nu} =0, \quad \Lambda > 0\quad (\bar{M}, \bar{g}),
\end{align}
\noindent which reduces to the Schwarzschild-de Sitter family if $a =0$ and de Sitter if $a$ and $ m =0.$   The question of stability of the 3+1 de Sitter spacetime has been resolved by Friedrich in a series of landmark works \cite{F86, F86_2,F91}. Existence and stability of even dimensional de Sitter spacetimes in higher dimensions was proved in \cite{M05}. In a remarkable recent breakthrough, Hintz and Vasy have resolved the \emph{nonlinear} stability of the Kerr-de Sitter black holes for small angular-momentum \cite{HV_16} (see also \cite{H_16}). 

The evolution of their methods, developed from Melrose's $b-$calculus, can be found in the list of references therein.  In this context, there were preceding results on the stability of the Kerr-de Sitter family for small angular-momentum for various model problems. The local energy decay of the wave equation on Schwarzschild-de Sitter is studied in \cite{JFH_08}. Asymptotics and resonances of linear waves on the Kerr-de Sitter metric are studied in \cite{SD12, SD15}. The asymptotic behaviour of the Klein-Gordon equation on the Kerr-de Sitter metric is studied in \cite{GGH_17}. Global boundedness for linear waves on Schwarzschild-de Sitter and Kerr-de Sitter cosmologies is proved in \cite{VS12}. A partial proof of \emph{nonlinear} stability of Schwarzschild-de Sitter cosmologies 
is discussed in \cite{VS16}. The decay of Maxwell's equations on Schwarzschild-de Sitter metric was proved recently in \cite{JK_17_1}. 

The case $\Lambda =0$ in \eqref{KdS1} and \eqref{ee} corresponds to the Kerr family of black holes, the stability of which is being pursued in a long-standing program that began soon after their discovery. 
In a remarkable recent development, the linear stability of Schwarzschild black hole spacetimes has been resolved in \cite{HDR_16} using the Teukolsky variables, by carrying forward the classic works in \cite{Regge-Wheeler_57, Zerilli_70, Vishveshvara_70,Moncrief_74}. The stability of Schwarzschild using metric coefficients has been resolved in \cite{HKW_16_1} and \cite{HKW_16_2}. A Morawetz estimate for the linearized gravity on Schwarzschild was proved in \cite{ABW_17}.  The decay of Maxwell's equations on the Schwarzschild metric was proved in \cite{PB_08}. Model problems for \emph{nonlinear} stability of  Schwarzschild are considered in \cite{Hto_16, KS_17, FP_17}. 

In contrast with Schwarzschild and Schwarzschild-de Sitter $(a =0)$, an important obstacle for Kerr and Kerr-de Sitter $(a \neq 0)$  is that the energy of even the linear wave equation is not necessarily positive-definite. This is caused by the ergo-region, which always surrounds a Kerr black hole with non-vanishing angular-momentum. 
A variety of techniques are introduced to cope with this issue for fields propagating on Kerr with small angular-momentum \cite{LB_15_1, LB_15_2, DR_11, Tato_11, Jluk_10}.

This leads us naturally to the question of stability of the Kerr-de Sitter family for large, but sub-extremal, angular-momentum $(\vert a\vert < m)$. It may be noted that  the lack of positivity of energy (and the related superradiance effect) makes proving the decay even more subtle for large $\vert a \vert$. The extraction of energy from a sub-extremal Kerr black hole using a linear wave equation is discussed in \cite{AAS_73, FKSY_08}, which is equivalent to the Penrose process \cite{DC_70}. The decay of the linear wave equation on a sub-extremal Kerr for fixed azimuthal modes is proved in \cite{FKSY_06, FKSY_08_E} using spectral methods \cite{FKSY_05}. Extending these works,  the decay for general solutions of the linear wave equation is proved in the intricate and remarkable work \cite{DRS_16}. However, little is known about the global behaviour of nonscalar and coupled fields propagating on Kerr or Kerr-de Sitter for large $\vert a \vert$.

The special case of axially symmetric linear waves, propagating on the Kerr-de Sitter spacetimes, admits a fortuitous simplification as the energy from the energy-momentum tensor is immediately positive-definite and is thus directly amenable to Morawetz and decay estimates (see, e.g., \cite{DR_10, LB_15_1} for the Kerr counterpart). The problem becomes much more subtle even for the axially symmetric coupled vector fields (e.g. Maxwell's equations) propagating on Kerr-de Sitter. Indeed the problem of positivity of total energy of axially symmetric Maxwell's equations on Kerr spacetimes has been an open problem for decades where, in principle, counter-examples for positivity of energy density can be constructed. This has recently been resolved for the full range of sub-extremal Kerr black holes in \cite{GM17} and separately in \cite{PW_17}. 

The subject of this paper is to prove equivalent results for Kerr-de Sitter $(\vert a \vert <m)$. Analogous to \cite{GM17}, these results also hold for the fully coupled axially symmetric Einstein-Maxwell perturbations of the Kerr-Newman-de Sitter spacetimes, which shall be discussed rigorously in a separate article. Importantly, in the pure Maxwell problem, the positive-definite energy we construct is naturally associated to \emph{gauge-invariant} quantities. In addition, without the several technicalities of the fully coupled Einstein-Maxwell problem, the pure Maxwell problem is more transparent.  

Following \cite{GM17}, we shall use the Hamiltonian formulation as it provides a mechanism to construct a gauge-invariant notion of mass-energy for the perturbative theory of $(\bar{M}, \bar{g})$ for the full $\vert a \vert <m$. 
Consider a Maxwell 2-form $F$ and a vector potential $A$ defined on $(\bar{M}, \bar{g})$, such that $F \fdg = dA$; then Maxwell's equations are the critical points of  the variational principle: 
\begin{align}
S_{M}[F] \fdg = -\frac{1}{4} \int \big\Vert F \big\Vert^2_{\bar{g}} \,\, \bar{\mu}_{\bar{g}}
\end{align}
\noindent for compactly supported variations.
If we perform the ADM decomposition 
\[ (\bar{M}, \bar{g}) = (\bar{\Sigma}, \bar{q}) \times \mathbb{R}\]
of the metric $\bar{g}$ and the vector potential $A$,
\begin{align}
\bar{g} =& - \bar{N}^2 dt^2 + \bar{q}_{ij} (dx^i + \bar{N}^i dt) \otimes (dx^j + \bar{N}^j dt) \\
A =& A_0 dt + A_i dx^i, \quad i, j=1, 2, 3,
\end{align}
and define $$\mathfrak{B}^i \fdg = \halb \eps^{ijk} (\ptl_j A_k- \ptl_k A_j),$$  the ADM variational principle is defined as 
\begin{align}\label{ADM-var}
I_{ADM}[A_i, \mathfrak{E}^i] \fdg =& \int \Big( A_i\ptl_t \mathfrak{E}^i - \halb \bar{N} \bar{
\mu}_{\bar{q}}^{-1}\bar{q}_{ij} ( \mathfrak{E}^i \mathfrak{E}^j +\mathfrak{B}^i \mathfrak{B}^j) +  \eps_{ijk} \bar{N}^i \mathfrak{E}^j \mathfrak{B}^k \notag\\ 
&\quad - A_0 \ptl_i \mathfrak{E}^i \Big) d^4x 
\end{align}
for the phase space $X^{\text{Max}}$, 
\[ X^{\text{Max}} \fdg = \{ (A_i, \mathfrak{E}^i), i= 1, 2, 3 \},\]
which results in the Maxwell field equations 
\begin{subequations} \label{adjoint}
\begin{align}
\ptl_t A_i =&- \bar{N} \bar{\mu}^{-1}_{\bar{q}} \bar{q}_{ij} \mathfrak{E}^j - \eps_{ijk} \bar{N}^j \mathfrak{B}^k + \ptl_iA_0, \\
\ptl_t \mathfrak{E}^i=& - \ptl_\ell (\bar{N} \bar{\mu}^{-1}_{\bar{q}} \bar{q}_{kj} \mathfrak{B}^j \eps^{k \ell i} ) +  \ptl_{\ell} (\bar{N}^{\ell} \mathfrak{E}^i - \bar{N}^i \mathfrak{E}^\ell ),\label{E-Eq}
\end{align}
\end{subequations}
where $\bar{\mu}_{\bar{q}}$ is the square root of the metric determinant of $(\bar{\Sigma}, \bar{q}).$
The $\mathfrak{E}^i$ field can be concisely represented in terms of the $F$ tensor as 
\begin{align}
\mathfrak{E}^i = \halb \eps^{ijk}\, \leftexp{*}{F}_{jk}.
\end{align}
The Maxwell constraint equations are 
\begin{align}\label{Max-const}
\ptl_i \mathfrak{E}^i =0, \quad i = 1, 2, 3.
\end{align}
\section{Dynamics with a Positive-Definite Hamiltonian}
Let $(\bar{M}, \bar{g})$ be the Kerr-de Sitter spacetime represented in \eqref{KdS1}, where $\Delta(r)$'s positive roots are $r_{\pm}$ and $r_c$ with $(r_\pm < r_c).$ If we consider the ADM decomposition of \eqref{KdS1},
\[ \bar{M} = \mathbb{R} \times \bar{\Sigma}, \]
where $\bar{\Sigma}$ is Riemannian.
The group $SO(2)$  acts on $(\bar{\Sigma}, \bar{q})$ in such a way that $\ptl_\phi$ is the associated Killing vector field.
  We define 
\begin{align}
\Sigma \fdg = \bar{\Sigma} / SO(2).
\end{align}
The fixed point set of the $SO(2)$ action on $\bar{\Sigma}$ is a union of two disjoint sets, which we represent together as $\Gamma$ (`the axes') for brevity. It may be noted that the fixed point set $\Gamma$ corresponds to
$\Vert \ptl_\phi \Vert_{\bar{g}} =0$ and also a boundary of $\Sigma.$ Finally, we define a Lorentzian manifold with boundary $M$ such that 
\[M \fdg= \bar{M}/ SO(2) = \Sigma \times \mathbb{R}. \]

\begin{proposition}
Suppose $(\bar{M}, \bar{g})$ is a Kerr-de Sitter spacetime with $\Delta$ as in \eqref{KdS1}.  Then the following statements hold. 
\begin{enumerate}[1]
\item[1.] The metric $\bar{g}$ can be represented in Weyl-Papapetrou form: 
\begin{align}
\bar{g} = e^{-2\gamma} \gg + e^{2\gamma} \Phi^2,
\end{align}
where $\Phi = d\phi + \mathcal{A}_\nu dx^\nu, \nu=0, 1, 2,$ $\gg$ is the Lorentzian metric of $M$.
\item[2.] There exists an auxiliary (scalar) potential $\omega$,
\[ \omega \fdg (M, \gg) \to \mathbb{R}\] such that $(\gamma, \omega)$satisfies the `shifted' wave maps equation: 
\begin{subequations}\label{swm}
\begin{align}
\square_\gg \gamma + \halb e^{-4\gamma}\gg^{\alpha \beta} \ptl_\alpha \omega \ptl_\beta \omega+ \Lambda e^{-2\gamma} =&0 \label{gammawm}\\
\square_\gg \omega - 4 e^{-4\gamma} \gg^{\a\b} \ptl_\a  \omega \ptl_\b \gamma =&0, \quad \text{on}  \quad(M, \gg) \setminus \Gamma,
\end{align}
\end{subequations}
we shall refer to $\omega$ as the gravitational twist potential. 
\item[3.] There exists a 3+1 decomposition of $(\bar{M}, \bar{g})$ such that it is smoothly foliated by 3D Riemannian maximal hypersurfaces. 
\end{enumerate} 
\end{proposition}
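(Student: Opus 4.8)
The plan is to establish the three parts in order, using the spacelike axial Killing field $\xi := \ptl_\phi$ of the $SO(2)$ action as the organizing structure; part~2 is the substantive step. Part~1 is algebraic. On the regular region $r_+ < r < r_c$ and off the axis $\Gamma$ the field $\xi$ is spacelike, so $\bar g_{\phi\phi} > 0$, and I set $e^{2\gamma} := \Vert\ptl_\phi\Vert^2_{\bar g} = \bar g_{\phi\phi}$, which one reads off from \eqref{KdS1} as a positive multiple of $\sin^2\theta$. Completing the square in $d\phi$ isolates the fibre: with $\mathcal{A}_\nu := \bar g_{\phi\nu}/\bar g_{\phi\phi}$ for $\nu = 0,1,2$ (only $\mathcal{A}_0$ is nonzero here, since \eqref{KdS1} has no $dr\,d\phi$ or $d\theta\,d\phi$ terms) and $\Phi := d\phi + \mathcal{A}_\nu dx^\nu$, the one-form $\Phi$ absorbs every $\phi$-cross term. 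The residual quadratic form in $dx^0, dx^1, dx^2$ is $\phi$-independent; setting $\gg$ equal to $e^{2\gamma}$ times this residue yields $\bar g = e^{-2\gamma}\gg + e^{2\gamma}\Phi^2$, and one verifies that $\gg$ is Lorentzian on $M$.

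For part~2 I would first construct $\omega$ from the twist one-form $\Omega_\mu := \epsilon_{\mu\nu\rho\sigma}\xi^\nu\nabla^\rho\xi^\sigma$. Using the Killing identity $\nabla_\rho\nabla_\sigma\xi_\mu = \bar R_{\mu\sigma\rho}{}^\lambda\xi_\lambda$ one obtains the classical relation $\nabla_{[\alpha}\Omega_{\beta]} \propto \epsilon_{\alpha\beta\rho\sigma}\xi^\rho \bar R^\sigma{}_\lambda\xi^\lambda$. Taking the trace of \eqref{ee} gives $R_{\bar g} = 4\Lambda$, hence the trace-reversed form $\bar R_{\mu\nu} = \Lambda\bar g_{\mu\nu}$ and $\bar R^\sigma{}_\lambda\xi^\lambda = \Lambda\xi^\sigma$; the right-hand side then contains $\Lambda\,\epsilon_{\alpha\beta\rho\sigma}\xi^\rho\xi^\sigma = 0$, so $d\Omega = 0$ and $\Omega = d\omega$ locally on $(M,\gg)\setminus\Gamma$. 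The cosmological term drops out of this integrability condition purely through the antisymmetry of $\epsilon$, which is precisely why the twist equation carries no $\Lambda$-correction; the second equation of \eqref{swm} is then the Geroch--Ernst divergence identity for the twist, recast on $(M,\gg)$.

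The $\gamma$-equation \eqref{gammawm} I would derive from the norm wave equation. Contracting the Killing wave identity $\square_{\bar g}\xi_\mu = -\bar R_{\mu\nu}\xi^\nu$ with $\xi^\mu$ yields $\square_{\bar g}(e^{2\gamma}) = -2\bar R_{\mu\nu}\xi^\mu\xi^\nu + 2\,\nabla_\nu\xi_\mu\nabla^\nu\xi^\mu = -2\Lambda e^{2\gamma} + 2\,\nabla_\nu\xi_\mu\nabla^\nu\xi^\mu$, and splitting the last term into its norm-gradient and twist parts produces the $e^{-4\gamma}\vert\ptl\omega\vert^2$ contribution. Passing to $(M,\gg)$, the conformal relation $\bar g = e^{-2\gamma}\gg$ on the horizontal block rescales $\square_{\bar g}$ by $e^{2\gamma}$ and contributes first-order terms that, together with those from expanding $\square_{\bar g}(e^{2\gamma})$, cancel all the $\vert\ptl\gamma\vert^2$ contributions; dividing through by $e^{2\gamma}$ then turns the cosmological source $\Lambda e^{2\gamma}$ into the shift $\Lambda e^{-2\gamma}$ and leaves the coefficient $\halb$ on $e^{-4\gamma}\gg^{\alpha\beta}\ptl_\alpha\omega\ptl_\beta\omega$, i.e. \eqref{gammawm}. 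The main obstacle is precisely this conformal bookkeeping in the $2+1$-dimensional quotient: tracking the Weyl weight that converts $\Lambda e^{2\gamma}$ into $\Lambda e^{-2\gamma}$, confirming the cancellation of the $\vert\ptl\gamma\vert^2$ terms, and pinning down the numerical coefficients that fix the hyperbolic-plane normalization of the target.

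For part~3 I would take the time function $t$ of \eqref{KdS1} directly. In these Boyer--Lindquist-type coordinates the shift is purely azimuthal, $\bar N^i = \bar N^\phi\delta^i_\phi$ with $\bar N^\phi = -\bar g_{t\phi}/\bar g_{\phi\phi}$, and every metric component is independent of $\phi$. Since $\ptl_t$ is Killing the slicing is stationary, so the second fundamental form obeys $\bar N k_{ij} = \halb(\nabla_i\bar N_j + \nabla_j\bar N_i)$ and its trace is $K = \bar N^{-1}\bar\mu_{\bar q}^{-1}\ptl_i(\bar\mu_{\bar q}\bar N^i) = \bar N^{-1}\bar\mu_{\bar q}^{-1}\ptl_\phi(\bar\mu_{\bar q}\bar N^\phi) = 0$ by axisymmetry; hence the slices are maximal. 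They are spacelike because the co-normal $dt$ is timelike, $\bar g^{tt} = -\bar N^{-2} < 0$, throughout $r_+ < r < r_c$, where $\Delta > 0$ and the lapse is real and positive; smoothness of the foliation then follows from smoothness of $\bar g$ on this region.
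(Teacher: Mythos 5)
Your proposal is correct and follows essentially the same route as the paper: the paper delegates parts 1 and 2 to \cite{NG_17_1} and records only that Einstein's equations force the one-form $G$ dual to $\mathcal{F}=d\mathcal{A}$ to be closed, which is exactly your $d\Omega=0$ computation via $\bar R_{\mu\nu}=\Lambda\bar g_{\mu\nu}$ and the antisymmetry of $\epsilon$, while your part 3 (stationary slicing plus $\phi$-independence giving $\grad_i(\bar{q})\bar{N}^i=0$ and hence vanishing mean curvature) is precisely the paper's argument. The one step you flag but do not carry out --- the coefficient-level conformal bookkeeping that yields \eqref{swm} on $(M,\gg)$ --- is the same step the paper outsources to \cite{NG_17_1}, so your outline neither conflicts with nor falls short of the paper's own proof.
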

\begin{proof}
For proofs of statements 1 and 2 above, see \cite{NG_17_1}; the system \eqref{swm} is coupled to 2+1 Einstein equations. The fact that the expansion parameter $\Lambda$ decouples from the Einstein equations and appears as the forcing term of \eqref{gammawm} will play a crucial role in our problem. For the convenience of the reader, we shall provide the construction of the 
scalar potential $\omega$ below. The Einstein field equations \eqref{ee} imply that the 1-form $G$ such that 
\begin{align} \label{w-def} 
\mathcal{F}_{\mu \nu} = e^{-4\gamma} \varepsilon_{\mu \nu \delta} \gg^{\delta \a}G_\a,
\end{align}
 where $\varepsilon$ is the volume form of the metric $\gg$ and $\mathcal{F}_{\mu \nu}\fdg= \ptl_\mu \mathcal{A}_\nu - \ptl_\nu \mathcal{A}_\mu, \mu, \nu,\a,\delta = 0, 1, 2,$ is closed. Therefore, by the Poincar\'e Lemma, $G = d \omega,$ where $\omega$ is the gravitational twist potential.

 For (3), consider the  ADM decompositon of \eqref{KdS1}, 
\begin{align}
\bar{g} = - \bar{N}^2 dt^2 + \bar{q}_{ij} (dx^i + \bar{N}^i dt) \otimes (dx^j + \bar{N}^jdt), \quad i, j = 1, 2, 3.
\end{align}
Subsequently, if $\grad(\bar{q})$ is the (intrinsic) covariant derivative of $(\bar{\Sigma}, \bar{q})$, then it follows that
\begin{align}
\grad_i(\bar{q}) \bar{N}^i \equiv 0, \quad (\bar{\Sigma}_t, \bar{q}_t), \quad \forall t \in \mathbb{R},
\end{align}
which holds for all $t$ in view of the $t$-translational symmetry of $(\bar{M}, \bar{g})$ in \eqref{KdS1}. Furthermore, consider the ADM decomposition of $(M, \gg )$,
\begin{align}
\gg = - N^2 dt^2 + q_{ab} (dx^a + N^a dt) \otimes (dx^b + N^b dt), \quad a, b = 1, 2.
\end{align}
We also have 
\begin{align}
\grad_a (q) N^a \equiv 0, \quad (\Sigma_t, q_t), \quad \forall t \in \mathbb{R}.
\end{align}
\end{proof}
\noindent It follows from the definition \eqref{w-def} of $\omega$ that
\begin{align}\label{w-rel}
\ptl_a \mathcal{A}_0 + N e^{-4\gamma} \eps_{ab} \bar{\mu}_q q^{bc} \ptl_c \omega = 0.
\end{align}
\noindent The explicit expression of Kerr-de Sitter spacetime in the `Weyl-Papapetrou' form is as 
follows:
\begin{align}
\bar{g}_{KdS}  =& - e^{-2\gamma}\frac{\Pi \Delta \sin^2 \theta}{ (1+ \frac{\Lambda}{3}a^2)^4} dt^2
+ e^{-2\gamma}\, \left(\frac{\Sigma e^{2\gamma}}{\Delta} dr^2 + \frac{\Sigma e^{2\gamma}}{\Pi} d\theta^2 \right) \notag\\
& + e^{2\gamma} \left(d\phi + \frac{a ( -2mr - \frac{\Lambda}{3}a^2 (r^2 + a^2)(1+ \cos^2 \theta))}{ - a^2 \sin^2 \theta \Delta +   \Pi(r^2 + a^2)^2} dt \right)^2.
\end{align}
Reading off various components of the Weyl-Papapetrou form, we have
\begin{subequations}
\begin{align}
e^{2\gamma} =& \frac{\sin^2 \theta (-a^2 \sin^2 \theta \Delta + \Pi (r^2 + a^2)^2)}{\Sigma (1+ \frac{\Lambda}{3} a^2)^2}, \\
N=& \frac{(\Pi \Delta)^{\halb} \sin \theta}{(1+ \frac{\Lambda}{3}a^2)^2}, \\
\mathcal{A}_0=& \frac{a( -2mr - \frac{\Lambda}{3} a^2 (r^2 +a^2) (1+ \cos^2 \theta))}{-a^2 \sin^2 \theta \Delta
+ \Pi(r^2 + a^2)^2},\\
\bar{\mu}_q^{-1} q_{ab} dx^a \otimes dx^b =& \left(\frac{\Pi}{\Delta}\right)^{\halb} dr^2 + \left( \frac{\Delta}{\Pi} \right)^{\halb} d\theta^2,
\end{align}
\end{subequations}
where $\bar{\mu}_q$ is the square root of the metric determinant of $(\Sigma, q).$ We are interested in the initial value  problem of  Maxwell's equations \eqref{adjoint} with axial symmetry. We assume that  the axially symmetric $F$ tensor is derivable from an axially symmetric vector potential $A$.

In view of the fact that the Kerr-de Sitter spacetime is also axially symmetric, let us construct a new phase space $X$. Firstly, consider a twist potential $$\lambda \fdg (M, \gg) \to \mathbb{R}$$
such that $\lambda \fdg = A_{\phi}$, so that  $\mathfrak{B}^a = \eps^{ab} \ptl_b \lambda$ and $\ptl_a \mathfrak{B}^a =0, a, b = 1, 2.$ It follows from the Maxwell constraint equations $\ptl_a \mathfrak{E}^a =0$ and the Poincar\'e Lemma that there exists a twist potential $$\eta \fdg (M, \gg) \to \mathbb{R}$$ such that $\mathfrak{E}^a = \eps^{ab}\ptl_b \eta.$ Likewise, it follows from the variational principle \eqref{ADM-var} that
 the conjugate momenta $u$ and $v$ defined as 
\begin{align}
u \fdg = \mathfrak{B}^\phi, \quad v \fdg = - \mathfrak{E}^{\phi}
\end{align}
form the dynamical canonical pairs with $\eta$ and $\lambda$ respectively. Thus we define the phase space $X$ as
\[X \fdg=\{ (\lambda, v), (\eta, u) \}.\]
 Maxwell's equations \eqref{adjoint}, can be transformed into the phase space $X$  and locally represented as 
\begin{subequations}\label{feq}
\begin{align}
&\ptl_t \eta = N e^{2\gamma} \bar{\mu}^{-1}_q u, \quad \ptl_t \lambda= N e^{2\gamma} \bar{\mu}^{-1}_q v, \\
&\ptl_t u = \ptl_b (N \bar{\mu}_q q^{ab} e^{-2\gamma} \ptl_a \eta ) + N \bar{\mu}_q q^{ab} e^{-4\gamma} \ptl_a \omega \ptl_b \lambda, \\
&\ptl_t v= \ptl_b (N \bar{\mu}_q q^{ab} e^{-2\gamma} \ptl_a \lambda) - N \bar{\mu}_q q^{ab}e^{-4\gamma} \ptl_a \omega \ptl_b \eta.
\end{align}
\end{subequations}
For the initial value problem of \eqref{feq}, we assume that the axially symmetric $F$ tensor has smooth and compactly supported initial data in a $t=t_0$ initial data slice $(\bar{\Sigma}_0, \bar{q}_0)$. 
Define initial data in $X$ as
\begin{align}
ID \fdg =\{ (\lambda_0, u_0), (\eta_0, v_0)\}, \quad (\Sigma_0, q_0).
\end{align} 
In this work, we shall assume that the initial data is compactly supported, strictly within the  interior of $\Sigma$, i.e., $Supp(ID) \subset \Sigma$. As a consequence, in the computations the boundary terms vanish at both the horizons. 
\noindent It may be noted that the global propagation of regularity of the Maxwell field $F$ in the \emph{domain of outer communications} of the Kerr-de Sitter metric $(\bar{M}, \bar{g})$ is standard. As a consequence, we have the following prescribed behaviour on the axes $\Gamma$ of the quotient space $(\Sigma, q):$
\begin{subequations}\label{axisbehav}
\begin{align}
\ptl_{\vec{s}} \lambda =0,&\,\, \ptl_{\vec{s}} \eta =0,\quad\, \text{on}\quad \Gamma, \\
\ptl_{\vec{n}} \lambda=0,&\,\, \ptl_{\vec{n}} \eta =0, \quad \text{on} \quad \Gamma, \\
\ptl_t \lambda=0,&\,\,\ptl_t \eta =0,\, \quad \text{on} \quad \Gamma, \quad \forall\, t \in \mathbb{R},
\end{align}
\end{subequations}
where $\ptl_{\vec{s}}$ and $\ptl_{\vec{n}}$ are the derivatives tangential and normal to the axes $\Gamma$ respectively. In view of \eqref{axisbehav} it may be noted that 
\[\mathfrak{E}^i =0, \quad \mathfrak{B}^i =0, \quad \text{on} \quad \Gamma.\]
It follows from \eqref{axisbehav} that one can choose $\lambda, \eta$ such that they are (uniformly) $0$ along $\Gamma.$ In principle, our Hamiltonian framework allows for the Coulomb type conserved charges; however, the behaviour at the axes is chosen only for convenience in functional analysis arguments. We now state the main theorem of the paper.

\begin{theorem}
Suppose $F$ is the electromagnetic  Faraday tensor with $\mathcal{L}_{\ptl_\phi} F \equiv 0,\, \mathcal{L}_{\ptl_\phi} A \equiv 0,$ propagating on the Kerr-de Sitter black holes \eqref{KdS1} with $\vert a \vert < m$ and further suppose that $F \in C^{\infty}(\bar{\Sigma}_0, \bar{q}_0)$ with $Supp(ID) \subset \Sigma_0$. Then the following statements hold for the initial value problem of $F$ on Kerr-de Sitter $(\bar{M}, \bar{g})$
\begin{enumerate}[1]

\item[1.] There exists a positive-definite Hamiltonian $H^{\text{Alt}}$ for the dynamics of the canonical pairs in the phase space $X = \{ (\lambda, u),(\eta, v) \}$ i.e., 
\begin{subequations}
\begin{align}
D_{u} \cdot H^{\text{Alt}} = \ptl_t \eta, \quad D_{\eta} \cdot H^{\text{Alt}} = - \ptl_t u , \\
D_{v} \cdot H^{\text{Alt}} = \ptl_t \lambda, \quad D_{\lambda} \cdot H^{\text{Alt}} = - \ptl_t v,
\end{align}
\end{subequations}
where $D$ is the (variational) directional derivative in the phase-space $X$.
\item[2.] There exits a divergence-free spacetime vector density $J$ such that its flux through $t$-constant hypersurfaces is positive-definite. 
\item[3.]  There exists a canonical transformation $U \fdg (X, H^{\text{Alt}}) \to (\ulin{X}, H^{\text{Reg}})$ to a `regularized' phase space $$\ulin{X} \fdg =
 \{ (\ulin{\lambda}, \ulin{u}),(\ulin{\eta}, \ulin{v}) \}$$ where, 
 $$\ulin{\lambda} \fdg= e^{-\gamma} \lambda,\quad \ulin{\eta} \fdg = e^{-\gamma} \eta,\quad \ulin{u} \fdg = e^{\gamma} u,\quad \ulin{v} \fdg = e^{\gamma} v,$$ such that the corresponding Hamiltonian $H^{\text{Reg}}$ is positive-definite.
\end{enumerate}
\end{theorem}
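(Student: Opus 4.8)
The plan is to collapse the reduced first-order system \eqref{feq} into a single complex, gauge-covariant wave equation on the quotient $(M,\gg)$ and then to exhibit its energy as a manifestly non-negative sum of squares. Introduce the complex field $\zeta \fdg= \eta + \ie\lambda$ and conjugate momentum $\pi \fdg= u + \ie v$, and abbreviate $P \fdg= N e^{2\gamma}\bar\mu_q^{-1}$, $Q^{ab} \fdg= N\bar\mu_q e^{-2\gamma} q^{ab}$, $f^b \fdg= N\bar\mu_q e^{-4\gamma} q^{ab}\ptl_a\omega$. The four equations \eqref{feq} then read $\ptl_t\zeta = P\pi$ together with
\begin{align}
\ptl_t\pi = \ptl_b\big(Q^{ab}\ptl_a\zeta\big) - \ie\, f^b\ptl_b\zeta ,
\end{align}
so the gravitational twist $\omega$ enters exactly as an imaginary, magnetic-type first-order coupling. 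The geometric fact behind the whole argument is that, although $\ptl_t$ is spacelike in the ergo-region of $(\bar M,\bar g)$, the reduced lapse $N = (\Pi\Delta)^{\halb}\sin\theta\,(1+\tfrac{\Lambda}{3}a^2)^{-2}$ is strictly positive on $r_+<r<r_c$; the ergo-region has been absorbed into $e^{2\gamma}$ and $\omega$ upon quotienting.

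For statements 1 and 2, I would first test the bare wave-map energy $H_0 \fdg= \halb\int_\Sigma\big(P^{-1}|\ptl_t\zeta|^2 + Q^{ab}\ptl_a\zeta\,\ptl_b\bar\zeta\big)\,d^2x$. Substituting \eqref{feq} and integrating by parts (the boundary contributions vanish at the horizons since $Supp(ID)\subset\Sigma_0$ and at $\Gamma$ by \eqref{axisbehav}) gives $\ptl_t H_0 = \int_\Sigma f^b(\ptl_t\eta\,\ptl_b\lambda - \ptl_t\lambda\,\ptl_b\eta)\,d^2x$, so $H_0$ by itself is not conserved --- the analytic fingerprint of superradiance. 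The cure is the magnetic correction
\[ H^{\text{Alt}} \fdg= H_0 - \halb\int_\Sigma f^b\,\mathrm{Im}\big(\bar\zeta\,\ptl_b\zeta\big)\,d^2x , \]
for which $\ptl_t H^{\text{Alt}} = \halb\int_\Sigma \ptl_b f^b\,\mathrm{Im}(\bar\zeta\,\ptl_t\zeta)\,d^2x$. This vanishes because $\ptl_b f^b = 0$, an identity that is not fortuitous: taking the curl of the defining relation \eqref{w-rel} for $\omega$ yields precisely $\ptl_b(N\bar\mu_q e^{-4\gamma}q^{ab}\ptl_a\omega)=0$. A direct variation then confirms that $H^{\text{Alt}}$ generates \eqref{feq}, i.e. $D_u H^{\text{Alt}} = \ptl_t\eta$, $D_\eta H^{\text{Alt}} = -\ptl_t u$, and likewise for $(\lambda,v)$ (statement 1); packaging its density and fluxes into a spacetime vector density $J$ with $\ptl_\mu J^\mu = 0$ --- the identity just proved --- gives statement 2, its $t$-flux being $H^{\text{Alt}}$.

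Statement 3 is the crux, and the rescaling $U$ makes it transparent. Since $\gamma$ is a fixed background, $\underline\zeta \fdg= e^{-\gamma}\zeta$, $\underline\pi \fdg= e^{\gamma}\pi$ preserves the symplectic potential ($\underline u\,d\underline\eta = u\,d\eta$, etc.) and is canonical. Rewriting $H^{\text{Alt}}$ in the regularized variables, the kinetic and gradient coefficients become the regular $N^{-1}\bar\mu_q$ and $\tilde Q^{ab}\fdg=N\bar\mu_q q^{ab}$; expanding $\ptl_a\zeta = e^{\gamma}(\ptl_a\underline\zeta + \underline\zeta\,\ptl_a\gamma)$ produces a cross term $\tilde Q^{ab}\ptl_a\gamma\,\ptl_b|\underline\zeta|^2$ which I integrate by parts, and the magnetic term I absorb by completing the square with the connection $\tilde A_a\fdg=-\halb e^{-2\gamma}\ptl_a\omega$. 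The result is
\begin{align}
H^{\text{Reg}} = \halb\int_\Sigma\Big( N^{-1}\bar\mu_q|\ptl_t\underline\zeta|^2 + \tilde Q^{ab}\,(\ptl_a+\ie\tilde A_a)\underline\zeta\,\overline{(\ptl_b+\ie\tilde A_b)\underline\zeta} + 2V\,|\underline\zeta|^2 \Big)\,d^2x ,
\end{align}
with $2V = \tilde Q^{ab}\ptl_a\gamma\,\ptl_b\gamma - \ptl_b(\tilde Q^{ab}\ptl_a\gamma) - \tfrac14 N\bar\mu_q e^{-4\gamma}q^{ab}\ptl_a\omega\,\ptl_b\omega$. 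The decisive step is to substitute the static reduction of the shifted wave-map equation \eqref{gammawm} --- which, using $\grad_a(q)N^a\equiv 0$ and stationarity, reads $\ptl_b(\tilde Q^{ab}\ptl_a\gamma)=N\bar\mu_q\square_\gg\gamma = -N\bar\mu_q\big(\halb e^{-4\gamma}q^{ab}\ptl_a\omega\,\ptl_b\omega + \Lambda e^{-2\gamma}\big)$ --- after which
\[ 2V = N\bar\mu_q\Big( q^{ab}\ptl_a\gamma\,\ptl_b\gamma + \tfrac14 e^{-4\gamma}q^{ab}\ptl_a\omega\,\ptl_b\omega + \Lambda e^{-2\gamma}\Big)\ \ge\ 0 , \]
since $N,\bar\mu_q,\Lambda>0$ and $q$ is Riemannian. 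Thus $H^{\text{Reg}}$ is a sum of three non-negative terms and is positive-definite.

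The main obstacle is exactly this last reorganization: each term of $2V$ is separately sign-indefinite, and one must recognize that the potential pieces generated by the rescaling and by the magnetic completion-of-square are precisely those eliminated by the background equation \eqref{gammawm}, leaving the residual forcing $\Lambda e^{-2\gamma}$ and two squares. Getting the static reduction of $\square_\gg\gamma$ correct --- the cancellation of the shift through $\grad_a(q)N^a\equiv 0$ and of the time derivatives through stationarity --- is where the care lies; note that the argument already closes for $\Lambda=0$, the positive cosmological constant only enhancing positivity. The remaining points are routine: all boundary terms vanish at the horizons via $Supp(ID)\subset\Sigma_0$ and at the axes via \eqref{axisbehav}, and the rescaled density stays regular where $e^{\gamma}\to0$ on $\Gamma$.
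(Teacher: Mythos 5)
Your argument is sound and, after the dust settles, your $H^{\text{Reg}}$ --- with connection $\tilde A_a = -\halb e^{-2\gamma}\ptl_a\omega$ and potential $2V = N\bar{\mu}_q\big(q^{ab}\ptl_a\gamma\,\ptl_b\gamma + \frac{1}{4}e^{-4\gamma}q^{ab}\ptl_a\omega\,\ptl_b\omega + \Lambda e^{-2\gamma}\big)$ --- coincides term by term with the paper's: the real and imaginary parts of $(\ptl_a+\ie\tilde A_a)\ulin{\zeta}$ are exactly the two squares $\ptl_a\ulin{\eta}+\halb\ulin{\lambda}e^{-2\gamma}\ptl_a\omega$ and $\ptl_a\ulin{\lambda}-\halb\ulin{\eta}e^{-2\gamma}\ptl_a\omega$ appearing in the paper's final expression. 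The key mechanism is also the same: the indefinite coupling produced by the twist (equivalently by $\mathcal{A}_0$ via \eqref{w-rel}) is absorbed by completing squares and then invoking the $\gamma$-equation of \eqref{wavemapKdS}, whose $\Lambda e^{-2\gamma}$ forcing is what turns the leftover potential into a sum of non-negative terms. What you do differently is the order of operations and the packaging. The paper first performs a Robinson-type rearrangement (its quantities $I$ and $II$) \emph{in the unrescaled variables}, so that already $H^{\text{Alt}}$ in \eqref{Hampos} has a pointwise non-negative density $\mathcal{E}^{\text{Alt}}$; the passage to $\ulin{X}$ is a second, separate step aimed at regularity near the axes. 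You instead keep $H^{\text{Alt}}$ in its raw ADM form (your $H_0$ plus the magnetic correction, which equals \eqref{origham} up to a total divergence) and do all of the positivity work in one stroke in the rescaled variables; the complex packaging $\zeta=\eta+\ie\lambda$, which exhibits the twist as a $U(1)$ connection, is a genuinely cleaner piece of bookkeeping, and your observation that $\ptl_b f^b=0$ is the curl of \eqref{w-rel} is exactly the second equation of \eqref{wavemapKdS}.

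One point needs repair. As organized, your statements 1 and 2 are established only for the \emph{integrated} quantities: your $H^{\text{Alt}}$ is positive only because it equals $H^{\text{Reg}}$ after discarding boundary terms (so the logic of Part 1 secretly depends on Part 3), and the component $J^t$ of the current you build for Part 2 is your indefinite density, not a pointwise non-negative one, whereas the paper's $J^t=\mathcal{E}^{\text{Alt}}$ is manifestly a sum of squares. If the flux positivity of statement 2 is to be read locally --- which is what one needs to run energy or Morawetz arguments downstream --- you should either carry out the paper's $I-II$ rearrangement before rescaling, or simply declare $J^{\text{Reg}}$, which your own computation already produces, to be the current of statement 2.
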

\begin{proof}
The ADM Hamiltonian energy, re-expressed in the phase space $X$ and using \eqref{w-rel}, consecutively transforms as follows:
\begin{align} \label{origham}
H =& \int_{\Sigma} \Big(\halb N \bar{\mu}^{-1}_{q} e^{2\gamma}( u^2  + v^2) + \halb N \bar{\mu}_q q^{ab} e^{-2\gamma} (\ptl_a \eta \ptl_b \eta + \ptl_a \lambda \ptl_b ) \notag\\ 
&\quad-  \mathcal{A}_0 \eps^{ab} \ptl_a \eta \ptl_b \lambda \Big) d^2 x \notag\\
=& \int_{\Sigma}  \Big(\halb N \bar{\mu}^{-1}_q e^{2\gamma}(u^2 + v^2) + \halb N \bar{\mu}_q q^{ab} e^{-2\gamma} (\ptl_a \eta \ptl_b \eta + \ptl_a \lambda \ptl_b \lambda) \notag\\
&\quad + N e^{-4\gamma} \bar{\mu}_q q^{ab} \ptl_a \omega \ptl_b \eta \lambda \Big) d^2 x
\end{align}
where the $-v \eps^{ab} \ptl_b\eta + u \eps^{ab} \ptl_b \lambda$ terms drop out. Now consider the quantity
\begin{align*}
I \fdg =&\frac{1}{4} N e^{-2\gamma} \bar{\mu}_q q^{ab}  \left(( \ptl_a \lambda -2\lambda \ptl_a\gamma) (\ptl_b\lambda - 2\lambda \ptl_b \gamma) + ( \ptl_a \eta -2\eta \ptl_a\gamma) (\ptl_b\eta - 2\eta\ptl_b \gamma) \right) \notag\\
&+ \frac{1}{4}N e^{-2\gamma} \bar{\mu}_q q^{ab} \big(  (\ptl_a\eta + \lambda e^{- 2\gamma} \ptl_a \omega)( \ptl_b \eta + \lambda e^{-2\gamma} \ptl_b \omega) \\
&+ ( \ptl_a \lambda - \eta e^{-2\gamma} \ptl_a \omega)( \ptl_b \lambda - \eta e^{-2\gamma} \ptl_b \omega) \big) 
- \halb Ne^{-2\gamma}  \bar{\mu}_q q^{ab} (\ptl_a \lambda \ptl_b \lambda + \ptl_a \eta \ptl_b \eta).
\end{align*}
We have, 
\begin{align}
I =& N e^{-2\gamma} \bar{\mu}_q q^{ab} (\ptl_a \gamma \ptl_b \gamma+ \frac{1}{4} e^{-4\gamma} \ptl_a \omega\ptl_b \omega) (\lambda^2 + \eta^2) \notag\\
&-Ne^{-2\gamma}\bar{\mu}_q q^{ab} (\lambda \ptl_a \gamma \ptl_b \lambda + \eta \ptl_a \gamma \ptl_b \eta - \halb \lambda e^{-2\gamma}\ptl_a\omega \ptl_b \eta + \halb \eta e^{-2\gamma} \ptl_a \omega \ptl_b \lambda ).
\end{align}
Recall the wave map system satisfied by $(\gamma, \omega)$: 
\begin{subequations}\label{wavemapKdS}
\begin{align}
\ptl_b (N \bar{\mu}_q q^{ab} \ptl_a \gamma) + \halb N \bar{\mu}_q e^{-4\gamma} q^{ab} \ptl_a  \omega \ptl_b\omega + N \bar{\mu}_q \Lambda e^{-2\gamma} =&0 ,\\
\ptl_b (N \bar{\mu}_q q^{ab} e^{-4\gamma} \ptl_a \omega)=&0.
\end{align}
\end{subequations}
\noindent Now consider the quantity: 
\begin{align}
II \fdg = & \halb \ptl_b \left( - N \bar{\mu}_q q^{ab} e^{-4\gamma} \ptl_a \omega \eta \lambda - N \bar{\mu}_q q^{ab} e^{-2\gamma}\ptl_a \gamma (\eta^2 + \lambda^2) \right) .
\end{align}
We have 
\begin{align*}
II =&-\halb  \big( N \bar{\mu}_q q^{ab} e^{-4\gamma} \ptl_a \omega \ptl_b \eta \lambda + \lambda \eta \ptl_b (N \bar{\mu}_q q^{ab} e^{-4\gamma} \ptl_a \omega) \\
&+ N \bar{\mu}_q q^{ab} e^{-4\gamma} \ptl_a \omega \ptl_b \lambda \eta \big)  \notag\\
&- \halb \big( e^{-2\gamma} \ptl_b (N \bar{\mu}_q q^{ab} \ptl_a \gamma) (\lambda^2 + \eta^2)  -2e^{-2\gamma}N \bar{\mu}_q q^{ab} \ptl_a \gamma \ptl_b \gamma ( \lambda^2 + \eta^2) \\
&-  2e^{-2\gamma}  N \bar{\mu}_q q^{ab} \ptl_a \gamma (\lambda \ptl_b \lambda + \eta \ptl_b \eta)  \big) \\
=&- \halb \left( N \bar{\mu}_q q^{ab} e^{-4\gamma} \ptl_a \omega \ptl_b \eta \lambda  + N \bar{\mu}_q q^{ab} e^{-4\gamma} \ptl_a \omega \ptl_b \lambda \eta \right)  \\
& - \halb ( e^{-2\gamma} (-\halb N \bar{\mu}_q e^{-4\gamma} q^{ab} \ptl_a  \omega \ptl_b\omega - N \bar{\mu}_q \Lambda e^{-2\gamma}) \\
&-2e^{-2\gamma} N \bar{\mu}_q q^{ab} \ptl_a \gamma \ptl_b \gamma ) ( \lambda^2 + \eta^2) 
-  e^{-2\gamma}  N \bar{\mu}_q q^{ab} \ptl_a \gamma (\lambda \ptl_b \lambda + \eta \ptl_b \eta).  
\end{align*}
\noindent Consequently, 
\begin{align}
I - II =   -\halb (\lambda^2 + \eta^2)\Lambda e^{-4\gamma} N \bar{\mu}_q + N \bar{\mu}_q q^{ab} e^{-4\gamma} \ptl_a \omega \ptl_b \eta \lambda,
\end{align}
as the $\ptl \omega \ptl \eta$ term occurs in both $I$ and $II$ with opposite signs.  
Therefore, using I--II, we can transform the Hamiltonian energy \eqref{origham} into the following manifestly positive form: 
\begin{align}\label{Hampos}
H^{\text{Alt}} \fdg=& \int_\Sigma \Big( \halb N e^{2\gamma} \bar{\mu}^{-1}_q (u^2 + v^2) + \halb \Lambda  N \bar{\mu}_q e^{-4\gamma} (\lambda^2 + \eta^2) \notag\\
&+ \frac{1}{4}N e^{-2\gamma} \bar{\mu}_q q^{ab} \big((\ptl_a \lambda -2\lambda \ptl_a\gamma) (\ptl_b \lambda-2\lambda \ptl_b \gamma) \notag\\
&+ (\ptl_a \eta -2\eta \ptl_a\gamma) (\ptl_b \eta-2\eta \ptl_b \gamma) \big) \notag\\
&+ \frac{1}{4} N e^{-2\gamma} \bar{\mu}_q q^{ab} \big(( \ptl_a \eta + \lambda e^{-2\gamma} \ptl_a \omega)(\ptl_b \eta + \lambda e^{-2\gamma} \ptl_b \omega) \notag \\&+ (\ptl_a \lambda - \eta e^{-2\gamma} \ptl_a \omega)(\ptl_b \lambda - \eta e^{-2\gamma} \ptl_b \omega) \big) \Big) d^2 x.
\end{align}
\noindent The aforementioned transformation of the original ADM Hamiltonian into a positive form in \eqref{Hampos} is motivated by the construction of the Robinson's identity \cite{Rob_74}, but now adapted to our problem. Crucially, we need to prove that the Hamiltonian structure of the equations is retained.

Consider a (variational) $1$-parameter flow of a generic phase point $P$ in the phase space $X$, parametrized by $s$. We shall denote the components of the variation at $P$ with respect to this flow as 
\begin{align}
u' \fdg = D \cdot u (P), \quad v' \fdg = D \cdot v(P), \quad \lambda' \fdg = D \cdot \lambda (P), \quad \eta' \fdg = D \cdot \eta (P).
\end{align}
\noindent Consider $D_u \cdot H^{\text{Alt}}$ and $D_v \cdot H^{\text{Alt}}$; we have \[D_u \cdot H^{\text{Alt}} = N e^{2\gamma} \bar{\mu}^{-1}_q u \quad \text{and} \quad  D_v \cdot H^{\text{Alt}} = N e^{2\gamma} \bar{\mu}^{-1}_q v, \] respectively. The quantities $D_\lambda \cdot H^{\text{Alt}}$ and $D_\eta \cdot H^{\text{Alt}}$ are more difficult. 
From \eqref{Hampos}, $D_\eta \cdot H^{\text{Alt}}$ and $D_\lambda \cdot H^{\text{Alt}}$ have the following types of terms: 

\begin{itemize}
\item 1st order $\ptl \eta \ptl \eta'$ and $\ptl \lambda \ptl \lambda'$: 
\begin{align*}
N e^{-2\gamma} \bar{\mu}_q q^{ab} \ptl_a \eta \ptl_b \eta' =& \ptl_b ( N e^{-2\gamma} \bar{\mu}_q q^{ab} \ptl_a \eta \eta') - 
\ptl_b ( N e^{-2\gamma}\bar{\mu}_q q^{ab} \ptl_b \eta) \eta'
\intertext{and}
N e^{-2\gamma} \bar{\mu}_q q^{ab} \ptl_a \lambda \ptl_b \lambda'
=& \ptl_b (N e^{-2\gamma} \bar{\mu}_q q^{ab} \ptl_a \lambda \lambda') - 
\ptl_b (N e^{-2\gamma} \bar{\mu}_q q^{ab} \ptl_a \lambda) \lambda';
\end{align*}
\item 1st order $\ptl \eta' \ptl \omega$ and $\ptl\lambda' \ptl \omega$:
\begin{align*}
\halb \lambda N e^{-4\gamma} \bar{\mu}_q q^{ab} \ptl_a \eta' \ptl_b \omega =&
\ptl_b (\halb \lambda N e^{-4\gamma} \bar{\mu}_q q^{ab} \ptl_a \omega \eta') - 
\ptl_b ( \halb \lambda N e^{-4\gamma} \bar{\mu}_q q^{ab} \ptl_b \omega) \eta'
\intertext{and}
-\halb N e^{-4\gamma} \bar{\mu}_q q^{ab} \eta \ptl_a \lambda' \ptl_b \omega=&
-\halb \ptl_b ( \eta N e^{-4\gamma} \bar{\mu}_q q^{ab} \ptl_a \omega \lambda') + \halb
\ptl_b ( \eta N e^{-4\gamma} \bar{\mu}_q q^{ab} \ptl_a \omega) \lambda';
\end{align*}
\item mixed type $\eta' \ptl \eta$, $\eta \ptl \eta' $ and $\lambda' \ptl \lambda$, $\lambda \ptl \lambda'$: 
\begin{align*}
&-N e^{-2\gamma} \bar{\mu}_q q^{ab} \ptl_a \eta \ptl_b \gamma \eta' - N e^{-2\gamma} \bar{\mu}_q q^{ab} \eta \ptl_a \eta' \ptl_b \gamma \notag\\
&= -N e^{-2\gamma} \bar{\mu}_q q^{ab} \ptl_a \eta \ptl_b \gamma \eta' - \ptl_b (N e^{-2\gamma} \bar{\mu}_q q^{ab} \eta \ptl_a \gamma \eta') + \ptl_b (N e^{-2\gamma} \bar{\mu}_q q^{ab} \eta \ptl_b \gamma) \eta'
\intertext{and}
&-Ne^{-2\gamma} \bar{\mu}_q q^{ab} \lambda' \ptl_a \lambda \ptl_b \gamma  - N e^{-2\gamma} \bar{\mu}_q q^{ab} \lambda \ptl_a \lambda' \ptl_b \gamma \notag\\
=&  -Ne^{-2\gamma} \bar{\mu}_q q^{ab} \lambda' \ptl_a \lambda \ptl_b \gamma
-\ptl_b (e^{-2\gamma} N \bar{\mu}_q q^{ab}\lambda \ptl_b \gamma \lambda') + \ptl_b (e^{-2\gamma} N \bar{\mu}_q q^{ab}\lambda \ptl_b \gamma) \lambda';
\end{align*}
\item 0th order: 
\begin{align}
2(N \bar{\mu}_q e^{-2\gamma} q^{ab}  \ptl_a \gamma \ptl_b \gamma + \frac{1}{4} N \bar{\mu}_q e^{-6\gamma} q^{ab} \ptl_a \omega \ptl_b \omega + \halb \Lambda N \bar{\mu}_q e^{-4\gamma})\eta \eta'
\intertext{and}
2 (N \bar{\mu}_q e^{-2\gamma} q^{ab} \ptl_a \gamma \ptl_b \gamma + \frac{1}{4} N \bar{\mu}_q e^{-6\gamma} q^{ab} \ptl_a \omega \ptl_b \omega + \halb \Lambda N \bar{\mu}_q e^{-4\gamma}) \lambda \lambda'.
\end{align}
\end{itemize}
Combining all the above, while using the system \eqref{wavemapKdS} again, we recover the full set of  field equations:
\begin{align}
D_ \eta \cdot H^{\text{
Alt}} =& - \ptl_b (N e^{-2\gamma} \bar{\mu}_q q^{ab}\ptl_b \eta) - N \bar{\mu}_q q^{ab} e^{-4\gamma}\ptl_a\omega \ptl_b \lambda \notag\\
=&\, - \ptl_t u 
\intertext{and}
D_\lambda \cdot H^{\text{Alt}} =& -\ptl_b (N e^{-2\gamma} \bar{\mu}_q q^{ab}\ptl_b \lambda) + N \bar{\mu}_q q^{ab} e^{-4\gamma}\ptl_a\omega \ptl_b \eta \notag\\
=&\,  -\ptl_t v.
\end{align} 
Now let us turn to Part 2. Define the energy density $\mathcal{E}^{\text{Alt}}$ of the Hamiltonian $H^{\text{Alt}}$ such that
\begin{align}
H^{\text{Alt}} = \int_{\Sigma} \mathcal{E}^{\text{Alt}}\, d^2 x.
\end{align}
Now define $ \bar{v} = N\bar{\mu}^{-1}_q v$ and $\bar{u} \fdg = N \bar{\mu}^{-1}_q u$.
We shall construct
  the divergence-free vector density from the time derivative of the density $\mathcal{E}^{\text{Alt}}$ of the  Hamiltonian $H^{\text{Alt}} $. The purpose of calculating the time derivative
  of the energy density is to obtain a pure spatial divergence.  Therefore, we collect the terms with $\bar{v}$ and $\bar{u}$ and their spatial derivatives $(\ptl_a\bar{v}, \ptl_a\bar{u})$ separately, both of which occur.  It turns out that these terms combine to form a pure patial divergence, if we use the background field equations.  Explicitly, we have the following terms in
 $\frac{\ptl}{\ptl t} \mathcal{E}^{\text{Alt}}$:
\begin{align}
\ptl_a \bar{v} (\halb N \bar{\mu}_q q^{ab} (2\ptl_b \lambda - \eta e^{-2\gamma}\ptl_b \omega -2  \lambda \ptl_b \lambda))
\intertext{and}
\ptl_a \bar{u} (\halb N \bar{\mu}_q q^{ab} (2\ptl_b \eta + \lambda e^{-2\gamma}\ptl_b \omega -2 \eta \ptl_b \gamma )),
\end{align} 
\noindent where the terms with $\bar{v}$ and $\bar{u}$ are 
\begin{align}
&\bar{u} (\halb N e^{-2\gamma} \bar{\mu}_q q^{ab} (-\ptl_a \omega (\ptl_b \lambda- \eta e^{-2\gamma} \ptl_b\omega) + 2 e^{2\gamma} \ptl_a \gamma (\ptl_b \eta + \lambda e^{-2\gamma} \ptl_b \omega) )) \notag\\
& + \bar{u} e^{2\gamma}( \ptl_b (N\bar{\mu}_q q^{ab}e ^{-2\gamma}\ptl_b \eta)+ N \bar{\mu}_q q^{ab} e^{-4\gamma}\ptl_a \omega \ptl_b \lambda + N \lambda
e^{-4\gamma}\bar{\mu}_q \Lambda) 
\intertext{and}
 &\bar{v} (\halb N e^{-2\gamma} \bar{\mu}_q q^{ab} (\ptl_a \omega (\ptl_b \eta +\lambda e^{-2\gamma} \ptl_b \omega ) + 2 e^{2\gamma} \ptl_a \gamma(\ptl_b \lambda - \eta e^{-2\gamma} \ptl_b \omega))) \notag\\
 &+ \bar{v} e^{2\gamma} (\ptl_b(N \bar{\mu}_q q^{ab}e^{-2\gamma}\ptl_a \lambda) - N \bar{\mu}_q q^{ab}e^{-4\gamma} \ptl_a \omega \ptl_b \eta + N \eta e^{-4\gamma} \bar{
 \mu}_q \Lambda)
\end{align}
which, in view of the system \eqref{wavemapKdS}, can be transformed to 
\begin{align}
\bar{v} \ptl_b (N \bar{\mu}_q q^{ab} ( \ptl_a \eta - \halb \eta e^{-2\gamma} \ptl_a \omega - \lambda \ptl_b \gamma))\\
\intertext{and}
\bar{u} \ptl_b (N \bar{\mu}_q q^{ab} (\ptl_a \eta + \halb \lambda e^{-2\gamma} \ptl_a \omega - \eta \ptl_a \gamma)),
\end{align}
respectively. Therefore, the time derivative of $\mathcal{E}^{\text{Alt}}$ can be transformed into a pure spatial divergence: 
\begin{align}
\frac{\ptl}{\ptl t} \mathcal{E}^{\text{Alt}} =& \frac{\ptl}{\ptl x^b} \big( \bar{u} N \bar{\mu}_q q^{ab} (\ptl_a \eta + \halb \lambda e^{-2\gamma} \ptl_a \omega - \eta \ptl_a \gamma ) \notag\\
&+ \bar{v} N \bar{\mu}_q q^{ab} (\ptl_a \lambda - \halb \eta e^{-2\gamma} \ptl_a \omega - \lambda \ptl_a \gamma) \big) \notag\\
=& \frac{\ptl}{\ptl x^b} \big( u N^2  q^{ab} (\ptl_a \eta + \halb \lambda e^{-2\gamma} \ptl_a \omega - \eta \ptl_a \gamma ) \notag\\
&+ v N^2  q^{ab} (\ptl_a \lambda - \halb \eta e^{-2\gamma} \ptl_a \omega - \lambda \ptl_a \gamma)\big)
\end{align}
which can be transformed into a divergence-free vector density
\begin{align}
J \fdg= J^t \ptl_t + J^b \ptl_b
\end{align}
where 
\begin{align*}
J^t \fdg=&\, \mathcal{E}^{\text{Alt}} \notag\\
J^b  \fdg=&\, -u N^2  q^{ab} (\ptl_a \eta + \halb \lambda e^{-2\gamma} \ptl_a \omega - \eta \ptl_a \gamma ) - v N^2  q^{ab} (\ptl_a \lambda - \halb \eta  e^{-2\gamma} \ptl_a \omega - \lambda \ptl_a \gamma).
\end{align*}
For  Part 3, consider the regularized phase space $\ulin{X} \fdg = \{ ( \ulin{\lambda}, \ulin{v}), (\ulin{\eta}, \ulin{u})  \}$
\[ \ulin{\gamma} \fdg = e^{-\gamma} \gamma,\quad \ulin{\eta} \fdg = e^{-\gamma} \eta,\quad \ulin{u} \fdg= e^{\gamma}u, \quad \ulin{v} \fdg= e^{\gamma} v\]
To construct a regularized Hamiltonian $H^{\text{Reg}}$, we shall use further identities: 
\begin{align*}
 &\hspace*{-2em}\frac{1}{4}N e^{-2\gamma} \bar{\mu}_q q^{ab} \big((\ptl_a \lambda -2\lambda \ptl_a\gamma) (\ptl_b \lambda-2\lambda \ptl_b \gamma) + (\ptl_a \eta -2\eta \ptl_a\gamma) (\ptl_b \eta-2\eta \ptl_b \gamma) \big) \notag\\
  &+ \frac{1}{4} N e^{-2\gamma} \bar{\mu}_q q^{ab} \big(( \ptl_a \eta + \lambda e^{-2\gamma} \ptl_a \omega)(\ptl_b \eta + \lambda e^{-2\gamma} \ptl_b \omega) \notag \\
  &+ (\ptl_a \lambda - \eta e^{-2\gamma} \ptl_a \omega)(\ptl_b \lambda - \eta e^{-2\gamma} \ptl_b \omega) \big)  \notag\\
  =\quad& \frac{1}{4} N \bar{\mu}_q q^{ab} ((\ptl_a \ulin{\lambda} -\ulin{\lambda} \ptl_a \gamma)( \ptl_b \ulin{\lambda} - \ulin{\lambda} \ptl_b \gamma) + (\ptl_b \ulin{\eta} -\ulin{\eta} \ptl_a \gamma)(\ptl_b \ulin{\eta} - \ulin{\eta} \ptl_b \gamma)) \\
  &+ \frac{1}{4} N \bar{\mu}_q q^{ab} ((\ptl_a \ulin{\eta} + \ulin{\eta} \ptl_a \gamma + \ulin{\lambda} e^{-2\gamma} \ptl_a \omega)(\ptl_b \ulin{\eta} + \ulin{\eta} \ptl_b \gamma + \ulin{\lambda} e^{-2\gamma} \ptl_b \omega)) \notag\\
  &+ \frac{1}{4} N \bar{\mu}_q q^{ab} ((\ptl_a \ulin{\lambda} + \ulin{\lambda} \ptl_a \gamma  - \ulin{\eta} e^{-2\gamma} \ptl_a \omega)(\ptl_b \ulin{\lambda} + \ulin{\lambda} \ptl_b \gamma  -\ulin{\eta} e^{-2\gamma} \ptl_b \omega)) \notag\\
  =\quad& \halb N \bar{\mu}_q q^{ab} ( (\ptl_a \ulin{\lambda} - \halb \ulin{\eta} e^{-2\gamma} \ptl_a \omega) (\ptl_b \ulin{\lambda} - \halb \ulin{\eta} e^{-2\gamma} \ptl_b \omega) \\
  &+ (\ptl_a \ulin{\eta} + \halb \ulin{\lambda} e^{-2\gamma} \ptl_a \omega)(\ptl_b \ulin{\eta} + \halb \ulin{\lambda} e^{-2\gamma} \ptl_b \omega)) \notag\\
  &+ \halb N \bar{\mu}_q q^{ab}( \ptl_a \gamma \ptl_b \gamma +  \frac{1}{4} e^{-4\gamma} \ptl_a \omega \ptl_b \omega) (\ulin{\lambda}^2 + \ulin{\eta}^2),
\end{align*}
where the $\ptl \omega \ptl \gamma$ terms cancel.  Thus, the energy $H^{\text{Alt}}$ can be transformed into 
\begin{align}
H^{\text{Reg}} \fdg =& \int  \Big(\halb N \bar{\mu}_q^{-1} (\ulin{u}^2 + \ulin{v}^2) + \halb N \Lambda \bar{\mu}_q e^{-2\gamma} (\ulin{\lambda}^2 + \ulin{\eta}^2)\notag \\
&+ \halb N \bar{\mu}_q q^{ab} ( \ptl_a \gamma \ptl_b \gamma +  \frac{1}{4} e^{-4\gamma} \ptl_a \omega \ptl_b \omega) (\ulin{\lambda}^2 + \ulin{\eta}^2)  \notag\\
&+\halb N \bar{\mu}_q q^{ab} ( (\ptl_a \ulin{\lambda} - \halb \ulin{\eta} e^{-2\gamma} \ptl_a \omega) (\ptl_b \ulin{\lambda} - \halb \ulin{\eta} e^{-2\gamma} \ptl_b \omega) \notag\\
&+ (\ptl_a \ulin{\eta} + \halb \ulin{\lambda} e^{-2\gamma} \ptl_a \omega)(\ptl_b \ulin{\eta} + \halb \ulin{\lambda} e^{-2\gamma} \ptl_b \omega)) \Big) d^2 x.\notag\\
\end{align}
Analogously to the calculations for $H^{\text{Alt}}$, we recover the field equations for the regularized phase space $\ulin{X}$:  
\begin{subequations}\label{Reg-Ham-Eq}
\begin{align}
\ulin{D}_{\ulin{u}} \cdot H^{\text{Reg}} = \ptl_t \ulin{\eta}, \quad \ulin{D}_{\ulin{\eta}} \cdot H^{\text{Reg}} = - \ptl_t \ulin{u},  \\
\ulin{D}_{\ulin{v}} \cdot H^{\text{Reg}} = \ptl_t \ulin{\lambda}, \quad \ulin{D}_{\ulin{\lambda}} \cdot H^{\text{Reg}} = - \ptl_t \ulin{v},
\end{align}
\end{subequations}
\noindent where $\ulin{D}$ is the usual (variational) directional derivative in $\ulin{X}.$ Likewise, if we define the energy density $\mathcal{E}^{\text{Reg}}$ such that 
\begin{align}
H^{\text{Reg}} = \int_{\Sigma} \mathcal{E}^{\text{Reg}}\,\, d^2 x
\end{align}
and time differentiate, using the system \eqref{Reg-Ham-Eq} we get: 
\begin{align}
\frac{\ptl}{\ptl t} \mathcal{E}^{\text{Reg}} = \frac{\ptl}{\ptl x^b} ( N^2 q^{ab} \ulin{u} ( \ptl_a \ulin{\eta} + \halb \ulin{\lambda} e^{-2\gamma} \ptl_a \omega ) + N^2 q^{ab} \ulin{v} ( \ptl_a \ulin{\lambda} - \halb \ulin{\eta} e^{-2\gamma} \ptl_a \omega) ),
\end{align}
\noindent which results in a vector field density
\begin{align}
J^{\text{Reg}} \fdg= (J^{\text{Reg}})^t \ptl_t + (J^{\text{Reg}})^b \ptl_b,
\end{align}
where 
\begin{align}
(J^{\text{Reg}})^t =& \mathcal{E}^{\text{Reg}}, \notag\\
(J^{\text{Reg}})^b=& -N^2 q^{ab} \ulin{u} (\ptl_a \ulin{\eta} + \halb \ulin{\lambda} e^{-2 \gamma} \ptl_a \omega) -
N^2 q^{ab} \ulin{v} ( \ptl_a \ulin{\lambda} - \halb \ulin{\eta} e^{-2\gamma} \ptl_a \omega),
\end{align}
which is also divergence free. The advantage of recasting in the $(\ulin{X}, H^{\text{Reg}})$ framework is that it has better behaviour on the axes and the horizon than $(X, H^{\text{Alt}})$.  
\end{proof}

\subsection*{Acknowledgements} 
The author is grateful to Vincent Moncrief for  enjoyable interactions and  feedback. A part of this work was done when the author was supported by DFG Fellowship GU 1513/1-1. The author is also grateful to the (anonymous) referee for constructive comments that improved the article. 

\begin{thebibliography}{10}

\bibitem{M05}
M.~Anderson.
\newblock Existence and stability of even-dimensional asymptotically de
  {S}itter spaces.
\newblock {\em Ann. Henri Poincar\'e}, 6(5):801--820, 2005.

\bibitem{LB_15_1}
L.~Andersson and P.~Blue.
\newblock Hidden symmetries and decay for the wave equation on the {K}err
  spacetime.
\newblock {\em Ann. Math.}, 182(3):787--853, 2015.

\bibitem{LB_15_2}
L.~Andersson and P.~Blue.
\newblock Uniform energy bound and asymptotics for the {M}axwell field on a
  slowly rotating {K}err black hole exterior.
\newblock {\em J. Hyper. Differential Equations}, 12(4):689--743, 2015.

\bibitem{ABW_17}
L.~Andersson, P.~Blue, and J.~Wang.
\newblock Morawetz estimate for linearized gravity on {S}chwarzschild.
\newblock {\em arXiv}, 1708.06943, 2017.

\bibitem{PB_08}
P.~Blue.
\newblock Decay of the {M}axwell field on the {S}chwarzschild manifold.
\newblock {\em J. Hyper. Differential Equations}, 5(4):807--856, 2008.

\bibitem{JFH_08}
J.-F Bony and D.~H{\"a}fner.
\newblock Decay and non-decay of the local energy for the wave equation in the
  de {S}itter - {S}chwarzschild metric.
\newblock {\em Comm. Math. Phys.}, 282(3):697--719, 2008.

\bibitem{DC_70}
D.~Christodoulou.
\newblock Reversible and irreversible transformations in black-hole physics.
\newblock {\em Phys. Rev. Lett}, 25(22):1596--1597, 1970.

\bibitem{HDR_16}
M.~Dafermos, G.~Holzegel, and I.~Rodnianski.
\newblock The linear stability of the {S}chwarzschild solution to gravitational
  perturbations.
\newblock {\em arXiv}, 1601.06467, 2016.

\bibitem{DR_10}
M.~Dafermos and I.~Rodnianski.
\newblock Decay for solutions of the wave equation on {K}err exterior
  spacetimes {I}-{I}{I}: the cases $|a|<m$ or axisymmetry.
\newblock {\em arXiv}, 1010.5132, 2010.

\bibitem{DR_11}
M.~Dafermos and I.~Rodnianski.
\newblock A proof of the uniform boundedness of solutions to the wave equation
  on slowly rotating {K}err backgrounds.
\newblock {\em Invent. Math.}, 185(3):467--559, 2011.

\bibitem{DRS_16}
M.~Dafermos, I.~Rodnianski, and Y.~Shlapentokh-Rothman.
\newblock Decay for solutions of the wave equation on {K}err exterior
  spacetimes iii: The full sub-extremal case $|a|<m$.
\newblock {\em Ann. Math.}, 183(3):787--913, 2016.

\bibitem{SD12}
S.~Dyatlov.
\newblock Asymptotic distribution of quasi-normal modes for {K}err-de {S}itter
  black holes.
\newblock {\em Ann. Henri Poincar\'e}, 12(5):1101--1166, 2012.

\bibitem{SD15}
S.~Dyatlov.
\newblock Asymptotics of linear waves and resonances with applications to black
  holes.
\newblock {\em Comm. Math. Phys.}, 335(3):1445--1485, 2015.

\bibitem{FKSY_05}
F.~Finster, N.~Kamran, J.~Smoller, and S.-T. Yau.
\newblock An integral spectral representation of the propagator for the wave
  equation in the {K}err geometry.
\newblock {\em Comm. Math. Phys.}, 260(2):257--298, 2005.

\bibitem{FKSY_06}
F.~Finster, N.~Kamran, J.~Smoller, and S.-T. Yau.
\newblock Decay of solutions of the wave equation in the {K}err geometry.
\newblock {\em Comm. Math. Phys.}, 264(2):465--503, 2006.

\bibitem{FKSY_08_E}
F.~Finster, N.~Kamran, J.~Smoller, and S.-T. Yau.
\newblock Decay of solutions of the wave equation in the {K}err geometry;
  erratum to {C}omm. {M}ath. {P}hys. 264(2): 465-503.
\newblock {\em Comm. Math. Phys.}, 280(2):563--573, 2008.

\bibitem{FKSY_08}
F.~Finster, N.~Kamran, J.~Smoller, and S.-T. Yau.
\newblock A rigorous treatment of energy extraction from a rotating black hole.
\newblock {\em Comm. Math. Phys.}, 287(3):829--847, 2008.

\bibitem{F86}
H.~Friedrich.
\newblock Existence and structure of past asymptotically simple solutions of
  {E}instein's field equations with positive cosmological constant.
\newblock {\em J. Geometry Phys.}, 3(1):101--117, 1986.

\bibitem{F86_2}
H.~Friedrich.
\newblock On the existence of n-geodesically complete or future complete
  solutions of {E}instein's field equations with smooth asymptotic structure.
\newblock {\em Comm. Math. Phys.}, 107(4):587--609, 1986.

\bibitem{F91}
H.~Friedrich.
\newblock On the global existence and the asymptotic behavior of solutions to
  the {E}instein-{M}axwell-{Y}ang-{M}ills equations.
\newblock {\em J. Differential Geom.}, 34(2):275--345, 1991.

\bibitem{GGH_17}
V.~Georgescu, C.~G{\'e}rard, and D.~H{\"a}fner.
\newblock Asymptotic completeness for superradiant {K}lein-{G}ordon equations
  and applications to the de {S}itter-{K}err metric.
\newblock {\em J. Eur. Math. Soc.}, 19:2371--2444, 2017.

\bibitem{NG_17_1}
N.~Gudapati.
\newblock On 3+1 {L}orentzian {E}instein manifolds with one rotational
  isometry.
\newblock {\em arXiv}, 1703.06331, 2017.

\bibitem{GM17}
N.~Gudapati and V.~Moncrief.
\newblock On axisymmetric {E}instein-{M}axwell perturbations of {K}err-{N}ewman
  black hole spacetimes.
\newblock {\em in preparation (title tentative)}.

\bibitem{H_16}
P.~Hintz.
\newblock Non-linear stability of the {K}err-{N}ewman-de {S}itter family of
  charged black holes.
\newblock {\em arXiv}, 1612.08547, 2016.

\bibitem{HV_16}
P.~Hintz and A.~Vasy.
\newblock The global non-linear stability of the {K}err-de {S}itter family of
  black holes.
\newblock {\em arXiv}, 1606.04014, 2016.

\bibitem{HKW_16_1}
P.-K. Hung, J.~Keller, and M.-T. Wang.
\newblock Linear stability of {S}chwarzschild spacetime subject to axially
  symmetric perturbations.
\newblock {\em arXiv}, 1610.08547, 2017.

\bibitem{HKW_16_2}
P.-K. Hung, J.~Keller, and M.-T. Wang.
\newblock Linear stability of {S}chwarzschild spacetime: The {C}auchy problem
  of metric coefficients.
\newblock {\em arXiv}, 1702.02843, 2017.

\bibitem{JK_17_1}
J.~Keller.
\newblock Decay of solutions to the {M}axwell equations on {S}chwarzschild-de
  {S}itter spacetimes.
\newblock {\em arXiv}, 1706.06735, 2017.

\bibitem{KS_17}
S.~Klainerman and J.~Szeftel.
\newblock Work in progress.

\bibitem{Hto_16}
H.~Lindblad and M.~Tohaneanu.
\newblock Global existence for quasilinear wave equations close to
  {S}chwarzschild.
\newblock {\em arXiv}, 1610.00674, 2016.

\bibitem{Jluk_10}
J.~Luk.
\newblock The null condition and global existence for nonlinear wave equations
  on slowly rotating {K}err spacetimes.
\newblock {\em J. Eur. Math. Soc.}, 15(5):1629--1700, 2013.

\bibitem{Moncrief_74}
V.~Moncrief.
\newblock Gravitational perturbations of spherically symmetric systems. {I}.
  the exterior problem.
\newblock {\em Ann. Phys.}, 88(2):323--342, 1974.

\bibitem{FP_17}
F.~Pasqualotto.
\newblock Nonlinear stability for the {M}axwell--{B}orn--{I}nfeld system on a
  {S}chwarzschild background.
\newblock {\em arXiv}, 1706.07764, 2017.

\bibitem{PW_17}
K.~Prabhu and R.~Wald.
\newblock Stability of stationary-axisymmetric black holes in vacuum general
  relativity to axisymmetric electromagnetic perturbations.
\newblock {\em arXiv}, 1708.03248, 2017.

\bibitem{Regge-Wheeler_57}
T.~Regge and J.A. Wheeler.
\newblock Stability of a {S}chwarzschild singularity.
\newblock {\em Phys. Rev.}, 108(4):1063--1069, 1957.

\bibitem{Rob_74}
D.~C. Robinson.
\newblock Classification of black holes with electromagnetic fields.
\newblock {\em Phys. Rev. D}, 10(2):458--460, 1974.

\bibitem{VS12}
V.~Schlue.
\newblock Global results for linear waves on expanding {K}err and
  {S}chwarzschild-de {S}itter cosmologies.
\newblock {\em Comm. Math. Phys.}, 334(2):977--1023, 2015.

\bibitem{VS16}
V.~Schlue.
\newblock Decay of the {W}eyl curvature in expanding black hole cosmologies.
\newblock {\em arXiv}, 1610.04172, 2016.

\bibitem{AAS_73}
A.A. Starobinsky.
\newblock Amplification of waves during reflection from a black hole.
\newblock {\em Soviet Physics JETP}, 37:38--32, 1973.

\bibitem{Tato_11}
D.~Tataru and M.~Tohaneanu.
\newblock A local energy estimate on {K}err black hole backgrounds.
\newblock {\em Int. Math. Res. Notices}, 2:248--292, 2011.

\bibitem{Vishveshvara_70}
C.V. Vishveshwara.
\newblock Stability of the {S}chwarzschild metric.
\newblock {\em Phys. Rev. D}, 1(10):2870--2879, 1970.

\bibitem{Zerilli_70}
F.J. Zerilli.
\newblock Effective potential for even parity {R}egge-{W}heeler gravitational
  perturbation equations.
\newblock {\em Phys. Rev. Lett}, 24(13):737--738, 1970.

\end{thebibliography}
\bibliographystyle{plain}

\end{document}